\begin{document}

\title{A Simplified Proof For The Application Of Freivalds' 
Technique to Verify Matrix Multiplication}

\titlerunning{Simplified proof to verify matrix multiplication}

\author{Vamsi Kundeti}
\institute{Department of Computer Science and Engineering\\
University of Connecticut\\
Storrs, CT 06269, USA\\
\mailsa\\
} \maketitle

\begin{abstract}
{\em Fingerprinting} is a well known technique, which is often used in
designing Monte Carlo algorithms for verifying identities involving matrices,
integers and polynomials. The book by Motwani and Raghavan~\cite{motwani1995}
shows how this technique can be applied to check the correctness of matrix 
multiplication -- check if $AB=C$ where $A,B$ and $C$ are three $n\times n$ matrices. 
The result is a Monte Carlo algorithm running in time $\Theta(n^2)$ with an exponentially 
decreasing error probability after each independent iteration. In this paper we 
give a simple alternate proof addressing the same problem. We also give further 
generalizations and relax various assumptions made in the proof.
\end{abstract}

\section{Introduction}
{\em Fingerprinting} or {\em Freivalds' technique} is a standard method which is often
employed in designing Monte Carlo algorithms. Let $U$ be a large universe/set of elements,
given any $x,y\in U$ our goal is to check if $x$ and $y$ are the same. Since we need 
$\Theta(\log(U))$ bits to represent any $x,y \in U$, this means checking if $x=y$ 
deterministically would need $\Omega(\log(U))$ time. The basic idea behind finger printing
is create a random mapping $r:U\rightarrow V$ such that $|V| \ll |U|$, and verify if $V(x)=V(y)$.
However it should be clear that $V(x)=V(y)$ does not necessarily mean $x=y$ -- in fact the goal is
to find a $V$ such the {\em error probability} $P[V(x)=V(y)|x=y]$ is very small. Once we prove
that our {\em error probability} is bounded by some constant, a Monte Carlo algorithm is clearly
immediate. Motwani and Raghavan~\cite{motwani1995} applied this technique to check the correctness
of matrix multiplication, we state the as follows. Given three $n\times n$ matrices $A,B$ and $C$ 
check if $AB=C$. Clearly a simple deterministic algorithm takes $\Theta(n^3)$ time. Firstly In 
this paper we give a simple alternate proof for the Theorem-$7.2$ presented in~\cite{motwani1995},
secondly we relax various constraints and give a much general proof.

\section{Our Proofs}
We first give a simple and alternative proof for Theorem-$7.2$ in ~\cite{motwani1995}. Later in Theorem~\ref{thm2}
we show that the assumption on the {\em uniformness} is not necessary. 
\begin{theorem}
\label{thm1}
Let $A$,$B$ and $C$ be three $n\times n$ matrices such that $AB \neq C$. Let $\vec{r} \in \{0,1\}^n$
is a random vector from a uniform distribution. Then $P[AB\vec{r} = C\vec{r} | AB \neq C] \leq 1/2$
\end{theorem}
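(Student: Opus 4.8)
The plan is to reduce the matrix identity to a statement about a single nonzero vector. Set $D = AB - C$. Since matrix multiplication distributes over vectors, the event $AB\vec{r} = C\vec{r}$ is exactly the event $D\vec{r} = \vec{0}$, so it suffices to bound $P[D\vec{r} = \vec{0}]$ given that $D \neq 0$. Because $D$ is a nonzero matrix, it has at least one nonzero entry; fix indices $i,j$ with $D_{ij} \neq 0$. Then $D\vec{r} = \vec{0}$ forces in particular its $i$-th coordinate to vanish, i.e. $\sum_{k=1}^{n} D_{ik} r_k = 0$, so $P[D\vec{r} = \vec{0}] \le P\big[\sum_{k=1}^{n} D_{ik} r_k = 0\big]$ and it is enough to bound the latter.

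Next I would use the principle of deferred decisions: reveal all coordinates $r_k$ with $k \neq j$ first, leaving $r_j$ uniform on $\{0,1\}$ and independent of them. Writing $S = \sum_{k \neq j} D_{ik} r_k$, the $i$-th coordinate equals $D_{ij} r_j + S$, which is a fixed affine function of the single remaining bit $r_j$. Since $D_{ij} \neq 0$, the equation $D_{ij} r_j + S = 0$ has exactly one solution $r_j = -S/D_{ij}$ over the reals, so among the two equally likely values $r_j \in \{0,1\}$ at most one makes the coordinate zero. Hence, conditioned on any fixing of the other coordinates, the conditional probability that $\sum_k D_{ik} r_k = 0$ is at most $1/2$.

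Finally I would remove the conditioning by the law of total probability: averaging the conditional bound $\le 1/2$ over all outcomes of $(r_k)_{k \neq j}$ gives $P[\sum_k D_{ik} r_k = 0] \le 1/2$, and combining with the inclusion of events above yields $P[AB\vec{r} = C\vec{r} \mid AB \neq C] \le 1/2$, as desired. The only delicate point is the conditioning step — one must be careful that $r_j$ remains uniform and independent after exposing the other bits, and that "at most one real root of an affine function" genuinely caps the conditional probability at $1/2$ rather than forcing it to be exactly $1/2$ (it can be $0$ when the unique root is not in $\{0,1\}$); everything else is routine.
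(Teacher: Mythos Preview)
Your argument is correct; it is in fact the classical Freivalds/Motwani--Raghavan proof that the paper explicitly sets out to give an alternative to. The paper proceeds differently: it interprets $AB\vec r$ and $C\vec r$ as linear combinations of the \emph{columns} of $AB$ and of $C$, lets $Y$ be the set of indices where the corresponding columns disagree, and notes that choosing $r_i=0$ for every $i$ with $\vec d_i\in Y$ forces $AB\vec r=C\vec r$; it then writes down a product of coin probabilities associated with this selection. Your row/entry approach via deferred decisions is airtight and gives the sharp $1/2$ bound uniformly for every nonzero $D=AB-C$. By contrast, the paper's column argument, as written, only exhibits a \emph{sufficient} configuration for equality (which yields a lower bound on the failure probability, not the required upper bound), and the asserted inequality $|Y|\le n-1$ can fail when all columns differ. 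So while the paper's viewpoint is pleasantly geometric (regarding the bits of $\vec r$ as column selectors), your route is the one that actually closes the proof.
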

\begin{proof}
Let $X$ be a $n\times n$ matrix and $\vec{x_1}, \vec{x_2}\ldots \vec{x_n}$ be the column vectors of $X$.
Then $X\vec{r} = \sum_{i=1}^{n}r_i\vec{x_1}$. This means that multiplying a vector with a matrix is linear
combination of the columns, the coefficient $r_i$ is the $i^{th}$ component of $\vec{r}$. Since $\vec{r}$
is a boolean and $r_i$ acts as an indicator variable on the selection of column $\vec{x_i}$. So if $\vec{r}$
is chosen from a uniform distribution $P[r_i=0] = P[r_i=1] = 1/2$. 

Now let $D=AB$ and $\vec{d_1},\vec{d_2}\ldots \vec{d_n}$ be the column vectors of $D$, similarly 
let $\vec{c_1},\vec{c_2}\ldots \vec{c_n}$ be the column vectors of $C$. Let
$Y =\{\vec{d_j} | \vec{d_j}\neq \vec{c_j}$, clearly $|Y| \geq 1$ since $C\neq D$. Then
$P[AB\vec{r} = C\vec{r} | AB \neq C] = \displaystyle\Pi_{\vec{d_i}\notin Y}P[r_i] = (1/2)^{n-|Y|} \leq 1/2$
since $1 \leq |Y|\leq n-1$.  Intuitively this means we select our random vector $\vec{r}$ such that
$r_i=0$ for all $d_i \in Y$, such a selection will always ensure $AB\vec{r} = C\vec{r}$ even though
$AB\neq C$.
\end{proof}

\begin{theorem}
\label{thm2}
Let $A$,$B$ and $C$ be three $n \times n$ matrices. Let $\vec{r'} = [r_1,r_2\ldots r_n]$ any vector
with each component $r_i$ is a $i.i.d$ random variable $r_i \sim f(r)$. Then 
$P[AB\vec{r} = C\vec{r} | AB\neq C] \leq f(r)$. Where $f(r)$ is an arbitrary probability density/distribution
function.
\end{theorem}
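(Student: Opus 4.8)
The plan is to run the standard ``principle of deferred decisions'' argument (essentially the one-variable case of the Schwartz--Zippel lemma), which trades the special structure of the uniform distribution on $\{0,1\}$ used in Theorem~\ref{thm1} for a bound on the largest atom of $f$. First I would set $D = AB - C$. Since $AB \neq C$ we have $D \neq \mathbf{0}$, so there is a position $(k,j)$ with $D_{kj} \neq 0$. The event $AB\vec{r} = C\vec{r}$ coincides with $D\vec{r} = \mathbf{0}$, which in particular forces the $k$-th coordinate to vanish, i.e.\ $\sum_{m=1}^{n} D_{km} r_m = 0$. It therefore suffices to bound the probability of this single scalar equation.

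Next I would expose the randomness of $\vec{r}$ one coordinate at a time, deferring the choice of $r_j$ until the end. Fix an arbitrary assignment of the $n-1$ variables $\{r_m : m \neq j\}$; conditioned on this, the quantity $t := -\,\frac{1}{D_{kj}}\sum_{m \neq j} D_{km} r_m$ is a constant, and the equation $\sum_m D_{km} r_m = 0$ holds if and only if $r_j = t$. Writing $q := \sup_{x} P[r_1 = x]$ for the largest point mass of $f$ (with $q = 0$ whenever $f$ is atomless, e.g.\ for any genuine continuous density), and using that $r_j$ is independent of the other coordinates and distributed according to $f$, we get
\[
P\left[(D\vec{r})_k = 0 \mid \{r_m\}_{m \neq j}\right] \;=\; P[r_j = t] \;\leq\; q .
\]
Since this holds for every conditioning, the law of total probability yields $P[AB\vec{r} = C\vec{r} \mid AB \neq C] \leq P[(D\vec{r})_k = 0] \leq q$, which is the claimed bound, with ``$f(r)$'' read as the maximal value attained by the probability mass function; for a $\{0,1\}$-uniform $r_i$ we have $q = 1/2$, so this recovers Theorem~\ref{thm1}.

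The only delicate point is the interpretation and the measure-theoretic bookkeeping: the inequality ``$\leq f(r)$'' only makes literal sense as a bound by $\sup_x P[r_1 = x]$, so in the continuous case the right-hand side collapses to $0$ and the statement says that such an $\vec{r}$ never fools the test. I expect the main obstacle to be phrasing this cleanly --- making precise that conditioning on $\{r_m\}_{m \neq j}$ is legitimate and that $t$ is genuinely a constant after that conditioning --- rather than any real computation; the rest is exactly the deferred-decisions trick already implicit in the proof of Theorem~\ref{thm1}, now carried out coordinate-wise instead of through the indicator-variable reading of a $0/1$ vector.
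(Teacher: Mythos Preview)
Your argument is correct and is the standard deferred-decisions proof of Freivalds' bound: isolate one nonzero entry $D_{kj}$ of $D = AB - C$, condition on all coordinates of $\vec{r}$ except $r_j$, and observe that the $k$-th row equation then pins $r_j$ to a single value, whose probability is at most the largest atom of $f$. Your reading of ``$\leq f(r)$'' as $\sup_x P[r_1 = x]$ is the only sensible interpretation, and you correctly note that this collapses to $0$ in the atomless case.

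The paper, however, takes a different route. Rather than working entrywise with $AB - C$, it argues column-by-column: it carries over from Theorem~\ref{thm1} the set $Y$ of column indices where $AB$ and $C$ disagree, and bounds the error probability by the product $\prod_{\vec{d_i}\notin Y} P[r = r_i]$, the idea being that setting $r_i = 0$ on the ``bad'' columns already forces $AB\vec{r} = C\vec{r}$. Your approach is the textbook Schwartz--Zippel-style argument and is more robust --- it needs only a single nonzero scalar in $AB - C$, it makes the conditioning step and the role of independence explicit, and it gives the sharp bound $\sup_x P[r_1 = x]$ directly. The paper's version, by contrast, is a one-line continuation of its own Theorem~\ref{thm1} with no new setup, but it leaves the direction of the inequality and the precise meaning of the product implicit; your proof would stand on its own even without Theorem~\ref{thm1} in hand.
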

\begin{proof}
Continuing with the proof of Theorem-~\ref{thm1} , $P[AB\vec{r} = C\vec{r}|AB\neq C] = \displaystyle\Pi_{\vec{d_i}\notin Y} P[r=r_i] \leq f(r)$.
\end{proof}

\begin{corollary}
There always exists an $\Theta(n^2)$ time Monte Carlo algorithm with exponentially decreasing error probability, for the
problem to check if $AB=C$. 
\end{corollary}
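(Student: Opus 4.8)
The plan is to turn Theorem~\ref{thm1} into an algorithm by never forming the product $AB$ explicitly, exploiting associativity of matrix--vector multiplication. First I would, on input $A,B,C$, draw a vector $\vec{r}\in\{0,1\}^n$ uniformly at random and compute the three vectors $\vec{u}=B\vec{r}$, then $\vec{v}=A\vec{u}=A(B\vec{r})$, and $\vec{w}=C\vec{r}$; each of these is a single matrix--vector product costing $\Theta(n^2)$ arithmetic operations, so one such trial runs in $\Theta(n^2)$ time. The algorithm then reports ``$AB=C$'' if $\vec{v}=\vec{w}$ and ``$AB\neq C$'' otherwise.

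Next I would verify the error guarantee, which is one-sided. If $AB=C$ then $AB\vec{r}=C\vec{r}$ for every $\vec{r}$, so the test never produces a false negative: whenever the true answer is $AB=C$ the algorithm is always correct. If $AB\neq C$, Theorem~\ref{thm1} gives $P[AB\vec{r}=C\vec{r}\mid AB\neq C]\le 1/2$, so a single trial errs with probability at most $1/2$. Running $k$ independent trials with fresh random vectors $\vec{r}^{(1)},\dots,\vec{r}^{(k)}$ and declaring ``$AB\neq C$'' as soon as any one trial detects a discrepancy, the probability that all $k$ trials fail to detect $AB\neq C$ is at most $(1/2)^k$ by independence; since this is the only way the procedure can err, its error probability is at most $2^{-k}$, which decreases exponentially in the number of iterations. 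Taking $k$ to be any fixed constant keeps the running time at $\Theta(k n^2)=\Theta(n^2)$ while driving the error below any desired constant.

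The only real subtlety to get right is the order of operations inside a trial: one must compute $A(B\vec{r})$ rather than $(AB)\vec{r}$, since forming $AB$ first would already cost $\Theta(n^3)$ and defeat the purpose; the remaining points (independence across iterations and the one-sided nature of the test) are routine. Everything else is a direct combination of associativity and Theorem~\ref{thm1}; indeed Theorem~\ref{thm2} shows the same scheme works with any i.i.d.\ coordinate distribution, at the cost of replacing the per-trial bound $1/2$ by the corresponding $f(r)$.
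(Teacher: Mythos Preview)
The paper states this corollary without proof, so there is no argument to compare against line by line. Your proof is correct and is precisely the standard derivation the authors evidently regard as immediate from Theorems~\ref{thm1} and~\ref{thm2}: compute $A(B\vec r)$ and $C\vec r$ via three matrix--vector products in $\Theta(n^2)$ time, observe that the error is one-sided with per-trial bound $1/2$, and amplify by $k$ independent repetitions to get error at most $2^{-k}$. The point you single out---computing $A(B\vec r)$ rather than $(AB)\vec r$---is exactly the reason the verification beats the $\Theta(n^3)$ cost of naive multiplication, and your closing remark about Theorem~\ref{thm2} matches the paper's intended generalization.
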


\section{Conclusions}
We give a simple and alternate proof for the proof given by Motwani~\cite{motwani1995}, to verify if $AB = C$
using a Monte Carlo algorithm. We also relax uniformness assumption made by the proof.

\bibliographystyle{splncs}
\bibliography{RAND-2009.bib}

\end{document}